 \let\mathscr\relax
\newcommand{\PQD}{\hbox{PQD}}
\newcommand{\GSP}{\hbox{GSP}}
\newcommand{\sPQD}{\hbox{($s$)-PQD}}
\newcommand{\dm}[1]{d^{\,2}{#1}\,}
\newcommand{\Tr}{\text{Tr}}
\newtheorem*{theorem}{Theorem}
\begin{document}

\title{Filter functions for the Glauber-Sudarshan $P$-function regularization} 
\author{Mani Zartab}
\affiliation{Department of Physics, University of Tehran, P.O. Box 14395-547, Tehran, Iran}
\affiliation{Física Teòrica: Informació i Fenòmens Quàntics, Departament de Física,
	Universitat Autònoma de Barcelona, 08193 Bellaterra (Barcelona), Spain}
\author{Ezad Shojaee}
\affiliation{National Institute of Standards and Technology, Boulder, Colorado 80305, USA}
\affiliation{Department of Physics, University of Colorado, Boulder, Colorado, 80309, USA}
\author{Saleh Rahimi-Keshari}
\affiliation{Department of Physics, University of Tehran, P.O. Box 14395-547, Tehran, Iran}
\affiliation{School of Physics, Institute for Research in Fundamental Sciences (IPM), P.O. Box 19395-5531, Tehran, Iran}


\begin{abstract}
The phase-space quasi-probability distribution formalism for representing quantum states provides practical tools for various applications in quantum optics such as identifying the nonclassicality of quantum states. We study filter functions that are introduced to regularize the Glauber-Sudarshan $P$ function. We show that the quantum map associated with a filter function is completely positive and trace preserving and hence physically realizable if and only if the Fourier transform of this function is a probability density distribution. We also derive a lower bound on the fidelity between the input and output states of a physical quantum filtering map. Therefore, based on these results, we show that any quantum state can be approximated, to arbitrary accuracy, by a quantum state with a regular Glauber-Sudarshan $P$ function. We propose applications of our results for estimating the output state of an unknown quantum process and estimating the outcome probabilities of quantum measurements. 
\end{abstract}
\maketitle

\section{Introduction}
The well-developed theory of phase-space quasiprobability distributions (\PQD s) plays an indispensable role in representing the quantum states of optical systems with infinite-dimensional Hilbert spaces. In particular, by using the Glauber-Sudarshan representation~\cite{Glauber1963,Sudarshan1963}, the density operator $\rho$ of a single-mode system can be expressed in terms of coherent states $\ket{\alpha}$,
\begin{equation}
	\label{eq:P-representation}
	\rho=\!\int\!\dm{\alpha} P(\alpha) \ket{\alpha}\!\bra{\alpha},
\end{equation}
where $P(\alpha)$ is called the Glauber-Sudarshan $P$-function (\GSP) and the integration is over the entire complex plane. This representation is central in quantum optics, in particular, for defining nonclassical states, whose \GSP\ is not a probability density~\cite{Mandel1986}. Nonclassicality is an important quantum feature that is related to entanglement generation in linear-optical networks~\cite{Asboth2005,Sperling2014,Gholipour2016} and is a resource for quantum computation~\cite{Veitch2013,Mari2012,Rahimi2016}, quantum metrology~\cite{Kwon2019,Ge2020} and the incompatibility of quantum measurements~\cite{Rahimi2021}.

The Glauber-Sudarshan representation has been widely used in quantum optics. It provides an interesting way of calculating the expectation values of normally ordered operators, which are relevant in the context of photodetection and measuring correlation functions~\cite{mandel_wolf_1995,Vogel2006}, and it is also useful in describing the evolution of open quantum systems~\cite{Carmichael1999}. Moreover, this representation allows us to express the action of any quantum process on a quantum state, by using the linearity property, in terms of its action on coherent states, $\mathcal{E}(\rho)=\!\int\!\dm{\alpha} P(\alpha) \mathcal{E}(\ket{\alpha}\!\bra{\alpha})$. This relation is useful for quantum information applications, in particular, for probing unknown quantum processes using coherent states, which are readily available from a laser source~\cite{Lobino2008,Rahimi2011}. However, the main obstacle in using the Glauber-Sudarshan representation is that for most nonclassical states the \GSP\ is highly singular, existing only as a generalized distribution~\cite{Cahill1965,SperlingPRA2016}. Therefore, an interesting question is whether this problem can be obviated by approximating any state, to arbitrary accuracy, by another state with a regular \GSP. 

A general strategy for regularizing the \GSP\ of a quantum state is to multiply its Fourier transform with a filter function~\cite{Klauder1966}. This filtering procedure can be viewed as a map that transforms any density operator $\rho$ with a singular \GSP\ to an operator $\rho_\Omega=\mathcal{E}_{\Omega}(\rho)$ whose \GSP\ is regularized. Using a specific class of filter functions, Klauder showed that $\rho_\Omega$ can be close to $\rho$ within an arbitrary accuracy~\cite{Klauder1966}. However, as we show in this paper, the filtering map associated with Klauder's filter cannot be described as a physical process and hence the operator $\rho_\Omega$ may not be a physical state.

The physical filtering maps are important as they can, in principle, be realized and used in quantum experiments~\cite{Kuhn2018}. In addition, the physicality of a filtering map $\mathcal{E}_{\Omega}$ guarantees that the output state $\rho_\Omega$ is a physical density operator, and therefore the states before and after filtering can be compared using standard distance measures with operational interpretations in quantum information, such as the fidelity and the trace distance~\cite{NielsenChuang}.

Nonclassicality filters are another class of filter functions that are proposed to verify the nonclassicality of quantum states~\cite{Kiesel2010,Kuhn2014}. The filtering maps associated with nonclassicality filters are physical~\cite{Kuhn2018}; they preserve the classicality of quantum states and convert the \GSP\ of nonclassical states into a regular function with some negativity. However, to use these filter functions for quantum state approximation, it is important to bound the distance between states before and after filtering.

In this paper, we identify the general class of filters that can be described by completely positive trace-preserving (CPTP) linear maps, which are hence physically implementable. Specifically, we show that the filtering map is CPTP if and only if the filter function is the Fourier transform of a probability density distribution. Thus, following the method in~\cite{Kuhn2018}, the filtering can be realized by a random application of the displacement operator on the initial state according to the Fourier transform of the filter function. We apply this condition to several examples of filter functions, and also identify a class of positive but not completely positive maps. In addition, we derive a lower bound on the fidelity between the input and output states of CPTP filters. Therefore, for an arbitrary bound on the accuracy, we can approximate any nonclassical state with a highly singular \GSP\ with a quantum state whose \GSP\ is regular and can be expressed more easily in the Glauber-Sudarshan representation. We discuss some applications of this formalism in quantum information processing, such as estimating the output of a quantum channel and estimating the outcome probabilities of a general measurement given the heterodyne record.   
	
The outline of this paper is as follows. We start by reviewing the general phase-space formalism and the filtering procedure in Sec.~\ref{sec:QPD-Filter}. In Sec.~\ref{sec:condition}, we introduce the necessary and sufficient condition for a filtering map, induced by the filter function, to be a physical quantum process, i.e., a CPTP map. Examples of various filter functions are discussed in Sec.~\ref{sec:examples}. Then we derive a lower bound on the fidelity between the filtered and unfiltered states in Sec.~\ref{sec:bound}. We discuss two applications of physical filters for regularizing the \GSP\ in Sec.~\ref{sec:application} and conclude the paper in Sec.~\ref{sec:conclusion}.

\section{Phase-space quasiprobability distributions and filter functions}\label{sec:QPD-Filter}

A density operator describing the physical state of a single-mode bosonic system can be expressed as~\cite{Cahill1969-OrderedExp}
\begin{equation}\label{eq:charac-expan}
	\rho= \frac{1}{\pi}\!\int\!\dm{\xi} \Phi(\xi)\, D(-\xi)
\end{equation}
where the integral is over the entire complex plane, $D(\xi)=\exp(\xi a^{\dagger}-\xi^*a)$ is the displacement operator, with $a^{\dagger}$ and $a$ being the creation and annihilation operators, respectively; and $\Phi(\xi)=\Tr[\rho D(\xi)]$ is the characteristic function\footnote{Here, $\hbar=1$; hence, $a=\frac{1}{\sqrt{2}}(\hat{x}+i\hat{p})$ in terms of canonical position and momentum operators, and $\xi=\frac{1}{\sqrt{2}}(u+iv)$ in terms of canonical coordinates.}. The physicality conditions of density operators ($\rho\ge 0$ and $\Tr[\rho]=1$) translates into the following conditions for physical characteristic functions: $\Phi(\xi)$ is continuous, $\Phi(0)=1$, and for every finite set of points $\{\xi_i\}$ the matrix $M_{jk}=\Phi(\xi_j-\xi_k) \exp[(\xi_j\xi_k^*-\xi_j^*\xi_k)/2]$ is non-negative~\cite{Kastler1965,Loupias1966,Narcowich1986,Nha2008}. \\

Equation~(\ref{eq:charac-expan}) can also be formulated as 
\begin{align}\label{eq:s-repres}
	\rho=\!\int\!\dm{\alpha} W^{(s)}(\alpha)\, T^{(-s)}(\alpha),
\end{align}
where the operators $T^{(-s)}(\alpha)$ are given by
\begin{align}
T^{(-s)}(\alpha)= \frac{1}{\pi}\!\int\!\dm{\xi} e^{-s|\xi|^2/2} D(-\xi)\, e^{\alpha\xi^*-\xi\alpha^*},
\end{align}
and
\begin{align}\label{eq:spqd-def}
	W^{(s)}(\alpha)=\frac{1}{\pi^2}\!\int\!\dm{\xi} \Phi(\xi)\, e^{s|\xi|^2/2} e^{\alpha\xi^*-\xi\alpha^*}
\end{align}
is known as the $s$-ordered phase-space quasiprobability distributions [\sPQD s]~\cite{Cahill1969-DenOper,HILLERY1984}. Here, $s$ is the ordering parameter, and for $s=1$, Eq.~(\ref{eq:s-repres}) becomes the Glauber-Sudarshan representation with $T^{(-1)}(\alpha)=\ket{\alpha}\!\bra{\alpha}$. For $s=0$, the \sPQD\ is the Wigner function~\cite{Wigner}, which is a regular function that may take on negative values for some nonclassical states. For $s=-1$, the \sPQD\ is the Husimi $Q$ function~\cite{Husimi}, which is always non-negative for positive operators. Note that for all values of  $s<1$ the operator $T^{(-s)}(\alpha)$ does not correspond to a physical density operator, and for $0<s\leq1$ the \sPQD\ can be highly-singular. 

Because of the many applications that phase-space representation offers, it is very useful to tame the singularities arising in \sPQD s, in particular, in the case of $s=1$ for the Glauber-Sudarshan representation. A general way to do that is to introduce a filter function that can regularize highly singular \sPQD s~\cite{Klauder1966}. Specifically, the filtering procedure is defined by multiplying the characteristic function with a filter function $\Omega(\xi)$,
\begin{equation} \label{eq:filtering-characteristic}
	\Phi_{\Omega}(\xi)=\Phi(\xi)\,\Omega(\xi).
\end{equation}
In this case, $\Phi_{\Omega}(\xi)$ can be thought of as the characteristic function of another operator $\rho_\Omega$ that is formally given by a linear map $\mathcal{E}_{\Omega}$,
\begin{align}\label{eq:filterng-map}
	\rho_\Omega=\mathcal{E}_\Omega(\rho)=\!\int\!\dm{\alpha} \tilde{\Omega}(\alpha) D(\alpha)\rho D^{\dagger}(\alpha),
\end{align}
where $\tilde{\Omega}(\alpha)$ is the Fourier transform of the filter function. One can verify this equation by using Eq.~(\ref{eq:charac-expan}) and $D(\alpha)D(-\xi)D^\dagger(\alpha)=\exp(\xi\alpha^*-\alpha\xi^*)D(-\xi)$. Notice that $\rho_\Omega$ may not necessarily be a density operator, as we discuss in the next section. 
By taking the Fourier transformation of Eq.~(\ref{eq:filtering-characteristic}) the \sPQD\ of operator $\rho_\Omega$ reads
\begin{align}\label{eq:spqd-convolution}
	\begin{split}
	W^{(s)}_{\Omega}(\alpha)=W^{(s)}\!\ast\tilde{\Omega}(\alpha)
	=\!\int\!\dm{\beta} W^{(s)}(\alpha-\beta)\;\tilde{\Omega}(\beta),
	\end{split}
\end{align}
which is the convolution of the \sPQD\ of density operator $\rho$ and Fourier transform of the filter function. Using this, the output operator of the filtering map, Eq.~(\ref{eq:filterng-map}) is given by 
\begin{equation}
	\rho_\Omega=\!\int\!\dm{\alpha} W^{(s)}_{\Omega}(\alpha)\, T^{(-s)}(\alpha).
\end{equation}
This shows that the filtering map, in general, can be described in terms of the \sPQD s. However, of our particular interest in this paper is the case $s=1$, corresponding to the Glauber-Sudarshan representation (\ref{eq:P-representation}), where Eq.~(\ref{eq:spqd-convolution}) becomes $P^{(s)}_{\Omega}(\alpha)=P^{(s)}\!\ast\tilde{\Omega}(\alpha)$. 

\section{Necessary and sufficient condition for physicality of a filtering map}\label{sec:condition}

A quantum process, described by a CPTP map, transforms density operators to density operators and, in principle, can be realized in the laboratory~\cite{NielsenChuang}. However, in general, the filtering map~(\ref{eq:filterng-map}) generated by the filter function $\Omega(\xi)$ may not be a physical quantum process. Here, we lay out the necessary and sufficient condition for a filter function to be CPTP, hence preserving the physicality of quantum states. 

\begin{theorem}
A filtering map $\mathcal{E}_\Omega$ is CPTP \textit{iff}  the filter function $\Omega(\xi)$ is the Fourier transform of a probability density function.
\end{theorem}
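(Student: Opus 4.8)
The plan is to prove the equivalence by recasting the hypothesis in terms of $\tilde{\Omega}$ and then treating the two implications separately. Since $\tilde{\Omega}$ is by definition the Fourier transform of $\Omega$, the statement ``$\Omega$ is the Fourier transform of a probability density'' is equivalent, up to the Fourier convention relating $\Omega$ and $\tilde{\Omega}$, to the concrete requirement that $\tilde{\Omega}$ itself be a probability density: $\tilde{\Omega}(\alpha)\ge 0$ for all $\alpha$ and $\int\!\dm{\alpha}\,\tilde{\Omega}(\alpha)=1$. I would work with this form throughout. The sufficiency direction is then immediate from Eq.~(\ref{eq:filterng-map}): when $\tilde{\Omega}\ge 0$ the operators $K(\alpha)=\sqrt{\tilde{\Omega}(\alpha)}\,D(\alpha)$ are well defined and furnish an operator-sum representation of $\mathcal{E}_\Omega$, which is manifestly completely positive as a continuous convex mixture of the unitary conjugations $\rho\mapsto D(\alpha)\rho D^\dagger(\alpha)$. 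Trace preservation then follows from the completeness relation $\int\!\dm{\alpha}\,K^\dagger(\alpha)K(\alpha)=\big(\int\!\dm{\alpha}\,\tilde{\Omega}(\alpha)\big)\mathbb{1}=\mathbb{1}$, using $D^\dagger(\alpha)D(\alpha)=\mathbb{1}$ and the normalization.

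For necessity I would split off trace preservation first. Evaluating Eq.~(\ref{eq:filtering-characteristic}) at $\xi=0$ gives $\Tr[\rho_\Omega]=\Phi_\Omega(0)=\Omega(0)$, while taking the trace of Eq.~(\ref{eq:filterng-map}) gives $\Tr[\rho_\Omega]=\Tr[\rho]\int\!\dm{\alpha}\,\tilde{\Omega}(\alpha)$; demanding $\Tr[\rho_\Omega]=1$ therefore forces the normalization $\int\!\dm{\alpha}\,\tilde{\Omega}(\alpha)=\Omega(0)=1$. The substantive claim is that complete positivity forces $\tilde{\Omega}\ge 0$ pointwise. I would establish this through the Choi--Jamio\l kowski operator $J=(\mathcal{E}_\Omega\otimes\mathrm{id})(\ket{I}\!\bra{I})$ built from the unnormalized maximally entangled vector $\ket{I}=\sum_n\ket{n}\!\ket{n}$, for which Eq.~(\ref{eq:filterng-map}) yields $J=\int\!\dm{\alpha}\,\tilde{\Omega}(\alpha)\,(D(\alpha)\otimes\mathbb{1})\ket{I}\!\bra{I}(D^\dagger(\alpha)\otimes\mathbb{1})$. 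Complete positivity is equivalent to $J\ge 0$, i.e.\ to $\bra{\psi}J\ket{\psi}=\int\!\dm{\alpha}\,\tilde{\Omega}(\alpha)\,\big|\bra{\psi}(D(\alpha)\otimes\mathbb{1})\ket{I}\big|^2\ge 0$ for every two-mode test vector $\ket{\psi}$.

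The main obstacle is converting this integral inequality into the pointwise statement $\tilde{\Omega}(\alpha_0)\ge 0$, and here I would exploit the orthogonality of displacement operators, $\Tr[D^\dagger(\alpha_0)D(\alpha)]=\pi\,\delta^{2}(\alpha-\alpha_0)$. Choosing $\ket{\psi}=(D(\alpha_0)\otimes\mathbb{1})\ket{I}$ gives $\bra{\psi}(D(\alpha)\otimes\mathbb{1})\ket{I}=\Tr[D^\dagger(\alpha_0)D(\alpha)]\propto\delta^{2}(\alpha-\alpha_0)$, so the weight in the integral concentrates at $\alpha_0$ and the sign of $\bra{\psi}J\ket{\psi}$ becomes that of $\tilde{\Omega}(\alpha_0)$; since $\alpha_0$ is arbitrary, this gives $\tilde{\Omega}\ge 0$ everywhere. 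The delicate point is that this computation is formal---the reference $\ket{I}$ is unnormalizable and the weight is a squared delta---so the rigorous version requires regularizing $D(\alpha_0)$ by a narrow normalizable superposition of displacements peaked at $\alpha_0$ and controlling the resulting limit. An alternative, and perhaps cleaner, route to the same conclusion is to apply $\mathcal{E}_\Omega\otimes\mathrm{id}$ to a two-mode squeezed vacuum and take the infinite-squeezing limit, testing positivity of the output through the quantum Bochner matrix condition $M_{jk}=\Phi(\xi_j-\xi_k)\exp[(\xi_j\xi_k^*-\xi_j^*\xi_k)/2]\ge 0$ quoted in Sec.~\ref{sec:QPD-Filter}, which in the limit reduces exactly to the requirement $\tilde{\Omega}\ge 0$.
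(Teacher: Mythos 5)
Your proposal is correct and follows essentially the same route as the paper: the sufficiency argument via the random-displacement (Kraus) decomposition and the normalization condition $\int\!\dm{\alpha}\,\tilde{\Omega}(\alpha)=\Omega(0)=1$ is identical, and the paper's necessity proof is precisely the rigorous version of your Choi-operator argument, obtained by replacing the unnormalizable test vector $(D(\alpha_0)\otimes\mathcal{I})\ket{I}$ with the displaced two-mode squeezed vacuum $(D(\beta)\otimes\mathcal{I})\ket{\psi_{SE}}$ and letting $\bar{n}\to\infty$, so that the Gaussian weight $e^{-(1+2\bar{n})|\alpha-\beta|^2}$ concentrates at $\beta=\alpha_0$ and extracts the sign of $\tilde{\Omega}(\alpha_0)$. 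The two-mode-squeezed-vacuum regularization you mention as an ``alternative, perhaps cleaner, route'' is exactly what the paper carries out.
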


\begin{proof}
If the Fourier transform of the filter function, $\tilde{\Omega}(\alpha)$, is a probability density, then according to Eq.~(\ref{eq:filterng-map}), ${\mathcal{E}}_\Omega(\rho)$ is a statistical mixture of displaced density operators $D(\alpha)\rho D^{\dagger}(\alpha)$, a valid density operator. Therefore, $\tilde{\Omega}(\alpha)$ being a probability density function is sufficient for $\mathcal{E}_\Omega$ to be a CPTP map. 

To prove that this condition is also necessary, suppose that our system is a subsystem $S$ of a bipartite system in the joint state $\rho_{SE}$. If ${\mathcal{E}}_\Omega$ is completely positive, then the state of the joint system after applying the map to subsystem $S$ and the identity $\mathcal{I}$ on subsystem $E$,  $(\mathcal{E}_\Omega\otimes\mathcal{I})\rho_{SE}$, must remain a positive operator for any $\rho_{SE}$, where the other subsystem $E$ can be any arbitrary quantum system. However, if $\tilde{\Omega}(\alpha)$ takes on negative values, meaning if it is not a probability density, we show that $(\mathcal{E}_\Omega\otimes\mathcal{I})\rho_{SE} \geq 0$ does not hold by a counterexample. Consider a bipartite bosonic system in a two-mode squeezed vacuum state
\begin{equation}
\ket{\psi_{SE}}= \sqrt{1-\chi^2}\sum_{n=0}^\infty \chi^n \ket{n}_{S}\otimes\ket{n}_{E},
\end{equation}
where $\ket{n}$ are the number states and $0\leq\chi<1$ is the squeezing parameter. By using Eq.~(\ref{eq:filterng-map}), we calculate the fidelity between the output operator $\sigma_{SE}=(\mathcal{E}_\Omega\otimes\mathcal{I})\ket{\psi_{SE}}\!\bra{\psi_{SE}}$ and the state $\ket{\phi_{SE}}=D(\beta)\otimes\mathcal{I}\ket{\psi_{SE}}$
\begin{equation}\label{eq:fidl-filter-map}
	\begin{split}
\bra{\phi_{SE}}\sigma_{SE}&\ket{\phi_{SE}}
=\!\int\!\dm{\alpha} \tilde{\Omega}(\alpha)\\ &\times\big|\!\bra{\psi_{SE}}\left(D(\alpha-\beta)\otimes\mathcal{I}\right)\ket{\psi_{SE}}\!\big|^2.
 \end{split}
\end{equation} 
Having the identity $\bra{n}D(\gamma)\ket{n}=\exp(-|\gamma|^2/2) L_n(|\gamma|^2)$ with $L_n(x)$ being the Laguerre polynomial of the order $n$ and its generating function $\sum_{n=0}^\infty t^n L_n(x)=1/(1-t)\exp\left(-tx/(1-t)\right)$, we obtain
\begin{align}
\bra{\psi_{SE}}\!\big[D(\gamma)\otimes\mathcal{I}\big]\!\ket{\psi_{SE}}&=(1-\chi^2)\sum_{n=0}^\infty \chi^{2n} \bra{n}D(\gamma)\ket{n}\nonumber\\
&=\exp\!\left(\!-\frac{|\gamma|^2}{2}-\frac{\chi^2|\gamma|^2 }{1-\chi^2}\right).
\end{align}
Therefore, the fidelity~(\ref{eq:fidl-filter-map}) becomes
\begin{equation}\label{eq:fidelity-filter-exp}
\bra{\phi_{SE}}\sigma_{SE}\ket{\phi_{SE}}
=\int\!\dm{\alpha} \tilde{\Omega}(\alpha)\, e^{-(1+2\bar{n})|\alpha-\beta|^2},
\end{equation}
where $\bar{n}=\chi^2/(1-\chi^2)$ is the mean photon number of the reduced density operator (traced over $E$) of $\ket{\psi_{SE}}$ that is a thermal state. By noting that the Dirac delta function can be defined as the limit of a Gaussian function,
$\delta^2(\gamma)=\delta(\operatorname{Re}(\gamma))\delta(\operatorname{Im}(\gamma))=\lim_{\bar{n}\to\infty}(2\bar{n}/\pi)\exp(-2\bar{n}|\gamma|^2)$, we see that if $\tilde{\Omega}(\alpha_0)<0$ for some $\alpha_0$, by choosing $\beta=\alpha_0$ in Eq.~(\ref{eq:fidelity-filter-exp}), we get
\begin{equation}
	\lim_{\bar{n} \to \infty}\frac{2\bar{n}}{\pi}\bra{\phi_{SE}}\sigma_{SE}\ket{\phi_{SE}}=\tilde{\Omega}(\alpha_0)<0.
\end{equation}  
This implies that by choosing a sufficiently large value of $\bar{n}$ the fidelity $\bra{\phi_{SE}}\sigma_{SE}\ket{\phi_{SE}}$ becomes negative, indicating that $\sigma_{SE}$ is not a positive operator. Therefore, if the Fourier transform of the filter function takes on negative values, the filtering map does not preserve the physicality of the entangled state $\ket{\psi_{SE}}$ and is not completely positive. Also, for the filtering map $\mathcal{E}_\Omega$ to be trace preserving, $\tilde{\Omega}(\alpha)$ must also be normalized, as $\Tr[\mathcal{E}_\Omega(\rho)]\!=\!\int\dm{\alpha} \tilde{\Omega}(\alpha)\!=\!\Omega(0)\!=\!1$.
Therefore, the filtering map is CPTP if and only if the function $\tilde{\Omega}(\alpha)$ is a probability density distribution.
\end{proof}

If this condition is satisfied, as we can see from Eq.~(\ref{eq:filterng-map}), the filtering process can be realized by applying a displacement operator chosen randomly according to the probability density distribution $\tilde{\Omega}(\alpha)$. In practice, to implement a displacement operation on a quantum state, one can overlap the state on a highly transmissive beam splitter with a coherent state~\cite{Kuhn2018}. So in this case, the filtered state $\rho_{\Omega}$ is always mixed and can be viewed as a noisy version of the original state $\rho$.

\section{Examples of filter functions}\label{sec:examples}

In this section, we consider several examples of filter functions that were previously studied in the literature and check whether the associated filtering maps are CPTP or not.

\subsection{Gaussian filters}
One simple example of filter functions corresponding to CPTP maps is the Gaussian function $\Omega_{r}(\xi)=\exp(-r|\xi|^2/2)$, where $r$ is a positive number. The Fourier transform is a Gaussian function $\tilde{\Omega}(\alpha)=2\exp(-2|\alpha|^2/r)/(r\pi)$, which is already a probability density. In fact, by using Eq.~(\ref{eq:spqd-def}), we can see that this is the filter function that relates the \sPQD s of the density operator in terms of the convolution (\ref{eq:spqd-convolution}),
\begin{equation}\label{eq:Gaus-filter}
	W^{(s)}_{\Omega}(\alpha)=W^{(s-r)}(\alpha)= W^{(s)}\!\ast \frac{2}{\pi r} \exp\!\bigg(\!\frac{-2|\alpha|^2}{r}\bigg).
\end{equation}
Note that for $1\leq r\leq 2$, the filtered \sPQD\ corresponds to quasiprobability distributions between the Wigner function and the Husimi $Q$ function, which are regular functions.

We can make some observations based on Eq.~(\ref{eq:Gaus-filter}) for $r \geq 0$. First, the \hbox{($s{-}r$)-PQD}\ of a density operator $\rho$ can be thought of as the \sPQD\ of a filtered density operator $\rho_{\Omega}$, which is obtained by applying random displacements to $\rho$ according to the Gaussian probability density $\tilde{\Omega}(\alpha)$. For example, by setting $s=r$, this implies that the \hbox{($0$)-PQD}, i.e., the Wigner function of a quantum state, can be viewed as the \hbox{($r$)-PQD}\ of the filtered state. The second observation is that a density operator $\rho$ is mixed if its \sPQD\ is equal to the \hbox{($s{-}r$)-PQD}\ of another density operator; this is because $\rho$ can be thought of as the output of the Gaussian filtering process with parameter $r$, which is always mixed. For instance, we can immediately see that $\rho$ is mixed if its \GSP\ is equal to the Wigner function of a number state. The last observation is that the \sPQD\ of a pure state cannot be the \hbox{($s{-}r$)-PQD}\ of any other physical state because the latter is equivalent to the \sPQD\ of a mixed filtered state $\rho_{\Omega}$.  As an interesting example, setting $s=r=1$ implies that the Wigner function cannot be a Dirac delta function that is the \GSP\ of a coherent state.

\subsection{Nonclassicality filters}\label{subsec:noncl-filt}

As another class of filter functions that can be described by CPTP maps, we consider the nonclassicality filter functions that regularize singular \GSP s of all nonclassical states. A nonclassicality filter can be defined as the autocorrelation of an infinitely differentiable function $\omega_L(\xi)$ \cite{Kiesel2010},
\begin{align}\label{eq:auto-correlation}
	\Omega_{L}(\xi)=\!\int\!d^2\zeta\, \omega_L(\zeta)\,\omega_L(\xi-\zeta),
\end{align}
where $0\leq L\leq\infty$ is the width parameter such that $\lim_{L \to \infty}\Omega_{L}(\xi)=\Omega_{L}(0)=1$. Therefore, the Fourier transform of the filter function $\tilde\Omega_L(\xi)$ is always guaranteed to be a probability density, and the negativity in the regularized \GSP $P_{\Omega}(\alpha)$, which is the \GSP\ of $\rho_{\Omega}$~\cite{Kuhn2018} and can be experimentally measured, indicates the nonclassicality of the state $\rho$. The nonclassicality can also be verified by measuring the moments of $\rho_{\Omega}$~\cite{RahimiPRA2012}.   

A class of nonclassicality filters is defined using 
\begin{align}\label{eq:omega_L}
	\omega_L(\xi) =\frac{1}{L} 2^{{1}/{q}}\sqrt{\frac{q}{2\pi\Gamma({2}/{q})}}\exp\!\bigg(\!\!-\frac{|\xi|^q}{L^q}\bigg),
\end{align}
where $2<q<\infty$ is a parameter characterizing the analytic form of $\omega_L(\xi)$, and $\Gamma$ is the gamma function~\cite{Kuhn2014}. The case $q=2$ corresponds to the Gaussian filters, which do not decay fast enough to regularize the \GSP\ of all nonclassical states~\cite{Kiesel2010}. Hence, the nonclassicality filters are non-Gaussian.

Nonclassicality filters have been used to experimentally verify the nonclassicality of single-photon-added thermal states~\cite{KieselPRA2011}, squeezed states of light~\cite{KieselPRL2011,AgudeloPRA2015}, an atomic spin-squeezed state~\cite{KieselPRA2012}, and single- and two-photon states using balanced homodyne detection~\cite{KuhnPRL2021}. These filters are also used to define nonclassicality witnesses~\cite{KieselPRA2012,KieselPRA2012-PNRD,KuhnPRL2016}, and they can be used to verify the nonclassicality of quantum processes, such as the single-photon addition process~\cite{RahimiPRL2013}.

\subsection{Klauder's filter}\label{Klaud-filter}

An example of a filter function whose filtering map is not CPTP is the one introduced by Klauder to approximate an arbitrary density operator by a bounded operator with an infinity differentiable \GSP \cite{Klauder1966}. We show that the associated filtering map is not physical and therefore the bounded operators obtained by this particular filtering map may not be valid density operators. Klauder's filter is defined as
\begin{align}
	\Omega^K_L(u,v)=e^{-[f(u-L)+f(-u-L)+f(v-L)+f(-v-L)]}
\end{align}
where $(u,v)=\sqrt{2}\left(\operatorname{Re}(\xi),\operatorname{Im}(\xi)\right)$ and $f(x)$ is defined as
\begin{align}
	f(x)=
	\begin{cases}
		x^4e^{-{1}/{x^2}} \quad \quad & x> 0 \\
		0 & x\leq0.
	\end{cases}
\end{align}
This filter function decays rapidly such that it regularizes the \GSP\ of all quantum states.

According to Bochner's theorem, the Fourier transform of the filter function $\tilde{\Omega}(\alpha)$ is probability density if and only if for any finite set of points $\{\xi_i\}$ the matrix $F_{jk}=\Omega(\xi_j-\xi_k)$ is positive semidefinite. Using the set of points $\{u_1=-L, u_2=0, u_3=L\}$ on the real axis, we can see that $\Omega^K_L(u_j-u_k,0)$ is not positive semidefinite because its determinant is negative,
\begin{align}
	\begin{split}
		\det\!\begin{pmatrix}
			1 & 1 & \Omega^K_L(-2L,0) \\
			1 & 1 & 1 \\
			\Omega^K_L(2L,0) & 1 & 1
		\end{pmatrix}\!=-(\Omega^K_L(2L,0)-1)^2,
	\end{split}\nonumber
\end{align}
where we used $\Omega^K_L(-2L,0)=\Omega^K_L(2L,0)$. Therefore, the Fourier transform of Klauder's filter, $\tilde\Omega^K_L(\xi)$, is not a probability density and the associated filtering map is not CPTP.

Klauder's filter function was introduced as a means to show that one can still use the Glauber-Sudarshan representation, despite the highly singular nature of the \GSP , to approximate the expectation value of any bounded operator with respect to a quantum state. Specifically, Klauder showed that one can always find a sequence of bounded operators, whose \GSP s are regularized by this filter function, that converge to the desired density operator in the trace-class norm~\cite{Klauder1966}. Klauder's filter function was later used for quantum process tomography using coherent states~\cite{Lobino2008}. In this method, a quantum process is represented by a rank-$4$ tensor, $\mathcal{E}_{nm}^{jk}=\bra{j}\mathcal{E}(\ket{n}\!\bra{m})\ket{k}$, that relates the matrix elements of the input and output density matrices in the Fock basis, $[\mathcal{E}(\rho)]_{jk}=\sum_{n m}\mathcal{E}_{nm}^{jk}\rho_{nm}$. Then, by using a regularized \GSP\ of $\ket{n}\!\bra{m}$ obtained with Klauder's filter $P_{\Omega,nm}(\alpha)$ and measuring the output density matrix for input coherent states $\bra{j}\mathcal{E}(\ket{\alpha}\!\bra{\alpha})\ket{k}$, the process tensor is approximated as $\mathcal{E}_{nm}^{jk}\approx\int d^2\alpha P_{\Omega,nm}(\alpha) \bra{j}\mathcal{E}(\ket{\alpha}\!\bra{\alpha})\ket{k}$. However, as we have shown, the filtering map associated with Klauder's filter does not necessarily preserve the physicality of quantum states, and this may lead to a nonphysical process tensor. The need to use a filter function was eliminated in an improved version of coherent-state quantum process tomography~\cite{Rahimi2011}. It was based on the fact that the \GSP\ of operators  $\ket{n}\!\bra{m}$ contains finite derivatives of the Dirac-$\delta$ function, and singularities of this type can be handled if one truncates the Hilbert space in the Fock basis. In this case, by measuring a finite number of samples from $\bra{j}\mathcal{E}(\ket{\alpha}\!\bra{\alpha})\ket{k}$ for various $\alpha$, the process tensor can be estimated~\cite{Rahimi2011}. Also, by considering an energy cutoff, the error introduced by the truncation of the Hilbert space can be upper bounded~\cite{Rahimi2011}.

\subsection{Wigner smoothing filters}

The marginals of the Wigner function are probability densities associated with the phase-space quadratures~\cite{Vogel2006}, but the Wigner function itself, in general, is not pointwise non-negative and hence cannot be viewed as a classical-like distribution over the phase space. This, in fact, reflects a fundamental distinction between quantum and classical phase-space theories. In this regard, methods for converting the Wigner function into non-negative, classical-like distributions in phase space, known as smoothed Wigner functions, have been of particular interest. Indeed, the Husimi $Q$ of a quantum state is one of those distributions that, according to Eq.~(\ref{eq:Gaus-filter}) for $r=1$ and $s=0$, can be viewed as the convolution of the Wigner function of the state and the Wigner function of the vacuum state. However, in a more general context of smoothing, one can convolve the Wigner function of the quantum state with the Wigner function of another state, known as the smoothing kernel~\cite{Jagannathan1987},
\begin{align}\label{eq:WDF smoothing}
	W_{\text{sm}}(\alpha)= W\!\ast W_{\text{ker}}(\alpha).
\end{align}
By comparing with Eq.~(\ref{eq:spqd-convolution}), we can see that the characteristic function of the smoothed Wigner function is given by $\Phi_{\text{sm}}(\xi)=\Phi(\xi)\Phi_{\text{ker}}(\xi)$, and the smoothing procedure is, in fact, a filtering map associated with the filter function $\Omega(\xi)=\Phi_{\text{ker}}(\xi)$, which is the characteristic function of the smoothing kernel. It was shown that $W_{\text{sm}}(\alpha)$ is always pointwise non-negative~\cite{Jagannathan1987}. Furthermore, if the smoothing kernel is a non-negative Wigner function, then the smoothing procedure corresponds to a CPTP map~\cite{Jagannathan1987,Narcowich1988}. 

Interestingly, in this context, we can identify filtering maps that are positive, i.e., preserve the physicality of single-mode states, but are not completely positive. For this purpose, it is necessary to introduce the concept of the \textit{Narcowich-Wigner spectrum}. Consider a function $f(\xi)$; the set of all real parameters $\eta$ for which the matrix with elements
\begin{align}
	F_{jk}=f(\xi_j-\xi_k)\exp\! \left({\frac{\eta}{4}(\xi_j\xi_k^*-\xi_j^*\xi_k)}\right)
\end{align}
is positive semidefinite for any finite set of complex numbers $\{\xi_i\}$ is called the Narcowich-Wigner spectrum of $f(\xi)$ and is denoted by $\mathcal{W}\{f(\xi)\}$. Based on this definition, one can verify the following properties: if $\eta \in \mathcal{W}\{f(\xi)\}$, then $ -\eta \in \mathcal{W}\{f(\xi)\}$; if $\eta_1 \in \mathcal{W}\{f(\xi)\}$ and $\eta_2 \in \mathcal{W}\{g(\xi)\}$, then $\eta_1+\eta_2 \in   \mathcal{W}\{f(\xi)g(\xi)\}$~\cite{Narcowich1988}. 

As discussed in Sec.~\ref{sec:QPD-Filter}, the characteristic function of a density operator must be continuous, $\Phi(0)=1$, and  $2 \in \mathcal{W}\{\Phi(\xi)\}$. In addition, based on Bochner's theorem, our necessary and sufficient condition for a continuous filter function $\Omega(\xi)$ with $\Omega(0)=1$ to generate a CPTP filtering map implies that $0 \in \mathcal{W}\{\Omega(\xi)\}$.
By using the above properties, one can show that if either $\{0,2\} \in \mathcal{W}\{\Phi_{\text{ker}}(\xi)\}$ or $\{2, 4\} \in \mathcal{W}\{\Phi_{\text{ker}}(\xi)\}$ is satisfied, then we always have  $\{0,2\} \in \mathcal{W}\{\Phi_{\text{sm}}(\xi)\}$, and therefore, $W_{\text{sm}}(\alpha)$ is a physical and pointwise nonnegative Wigner function ~\cite{Narcowich1988}. However, as we have shown, if $0 \notin \mathcal{W}\{\Phi_{\text{ker}}(\xi)\}$, then the filtering map is not completely positive. Thus, we can identify a class of filter functions whose filtering maps are positive but not completely positive: filter functions that have element 4 but not element 0 in their Narcowich-Wigner spectrum. An example of such a filter function is~\cite{Narcowich1988}
\begin{align}
	\Phi_\text{ker}(\xi)=\bigg(\!1-\frac{3}{2}|\xi|^2\!\bigg)e^{-|\xi|^2},
\end{align}
whose Fourier transform takes on negative values.

\section{Bound on the distance between original and filtered states}\label{sec:bound}

Having identified physical filtering processes described by CPTP maps, the question is now how to compare the states before and after the filtering process. In this section, we derive a lower bound on the fidelity between these two states, which are the output and input states of the CPTP filtering map $\mathcal{E}_\Omega$.

The fidelity between two states is given by $F(\rho_1,\rho_2)=\max\lvert\langle \mu_1| \mu_2 \rangle\rvert^2$, where the maximum is taken over all possible purifications $|\mu_1\rangle$ and $|\mu_2\rangle$ of $\rho_1$ and $\rho_2$, respectively~\cite{NielsenChuang}. Suppose $\ket{\mu_{SE}}\!=\!\sum_j \sqrt{\lambda_j} \ket{s_j}_{S}\!\otimes\!\ket{e_j}_{E}$ is a purification of $\rho$, where $\lambda_j$ are its eigenvalues and $\ket{s_j}_S$ and $\ket{e_j}_{E}$ are orthogonal bases for subsystems $S$ and $E$, respectively. Using this state and the filtering map $\mathcal{E}_\Omega$ given in Eq.~(\ref{eq:filterng-map}), we can calculate the entanglement fidelity~\cite{Schumacher1996,Nielsen1996}
\begin{align}\label{eq:ent-fild}
F_e(\rho,\mathcal{E}_\Omega)&=\bra{\mu_{SE}} \left(\mathcal{E}_\Omega\otimes\mathcal{I}\right)\left(\ket{\mu_{SE}}\!\bra{\mu_{SE}}\right)\ket{\mu_{SE}}\nonumber\\
&=\int\!\dm{\alpha} \tilde{\Omega}_L(\alpha)\, \big|\!\bra{\mu_{SE}}\!\big(D(\alpha)\otimes\mathcal{I}\big)\!\ket{\mu_{SE}}\!\big|^2\nonumber\\
&=\int\!\dm{\alpha}\tilde{\Omega}_L(\alpha)\,\lvert\Phi(\alpha)\rvert^2,
\end{align}
where in the last line we have used the definition of the characteristic function of $\rho$, $\bra{\mu_{SE}}\!\big(D(\alpha)\otimes\mathcal{I}\big)\!\ket{\mu_{SE}}=\Tr[\rho D(\alpha)]=\Phi(\alpha)$. Here, the width parameter $L$ of the filter function can be defined such that $\lim_{L \to \infty} \tilde{\Omega}_L(\alpha)=\delta^2(\alpha)$, as in the nonclassicality filters given by Eqs.~(\ref{eq:auto-correlation}) and (\ref{eq:omega_L}), for example. In this case, by choosing sufficiently large values of $L$ the entanglement fidelity~(\ref{eq:ent-fild}) can be made arbitrarily close to $\lvert\Phi(0)\rvert^2=1$.  

The entanglement fidelity is always a lower bound on the fidelity between the states before and after a quantum process~\cite{Nielsen1996}. Therefore, Eq.~(\ref{eq:ent-fild}) provides a lower bound on the fidelity between the filtered and unfiltered states, $F_e(\rho,\mathcal{E}_\Omega)\leq F(\rho,\rho_\Omega)$, which can be adjusted by using the width parameter $L$. For any $\epsilon >0$ one can choose $L$ such that 
\begin{equation} \label{fidelity}
	1-\epsilon\leq\!\int\!\dm{\alpha}\tilde{\Omega}_L(\alpha) \lvert\Phi(\alpha)\rvert^2\leq F(\rho,\rho_\Omega).
\end{equation}
This relation provides an upper bound on the error associated with the regularization of the \GSP\ or, in other words, using $\rho_\Omega$ instead of $\rho$. Therefore, this bound makes physical filtering maps a useful tool in quantum information applications, as we discuss in the following section.

If the unfiltered state is pure, $\rho=\ket{\psi}\!\bra{\psi}$, we have an exact expression for the fidelity 
\begin{align}
F(\rho,\rho_\Omega)\!=\!\bra{\psi}\!\mathcal{E}_\Omega\!\left(\ket{\psi}\!\bra{\psi}\right)\!\ket{\psi}\!=\!\!\int\!\!\dm{\alpha}\tilde{\Omega}_L(\alpha) \lvert\Phi(\alpha)\rvert^2\!,
\end{align}
which can be simply verified using Eq.~(\ref{eq:filterng-map}). Note also that if a CPTP filtering map is applied to a subsystem of a joint system, by using Eq.~(\ref{eq:ent-fild}), one can still find a lower bound for the fidelity between the joint states before and after the filtering map.

We also know that the fidelity between two states $\rho_1$ and $\rho_2$ provides an upper bound on the trace distance, $D(\rho_1,\rho_2)\leq \sqrt{1-F({\rho_1},\rho_2)}$, where $D(\rho_1,\rho_2)=\frac{1}{2}\Tr\left[|\rho_1-\rho_2|\right]$ with $|A|\equiv\sqrt{A^\dagger A}$~\cite{NielsenChuang}.   
Hence, the lower bound on the fidelity Eq.~(\ref{eq:ent-fild}) can also be used to obtain an upper bound on the trace distance between filtered and unfiltered states, 
\begin{align} \label{eq:tra-distance}
	\begin{split}
D(\rho,\rho_\Omega)\leq \sqrt{1-F_e({\rho},{\mathcal{E}}_\Omega)},
	\end{split}
\end{align}
where, as discussed before, this bound can be made arbitrarily small by choosing a sufficiently large value of the width parameter $L$.

\section{The \GSP\ regularization and its applications}\label{sec:application}
Our results for phase-space filtering maps have been, so far, general and can be applied to the \sPQD\ representation~(\ref{eq:s-repres}) for any value of $s$. We turn now to our main interest in this paper, the Glauber-Sudarshan representation~(\ref{eq:P-representation}), and consider physicality preserving filter functions ${\Omega}_L(\alpha)$ with a width parameter $L$ [$\lim_{L \to \infty}\Omega_L(\alpha)=1$] that can regularize the \GSP\ of quantum states. An example of such a filter is the nonclassicality filter, discussed in Sec.~\ref{subsec:noncl-filt}.  Based on this, an important implication of our results is that any quantum state $\rho$ can be approximated by another state,
\begin{equation}
\label{eq:reg-P-representation}
\rho_{\Omega}=\!\int\!\dm{\alpha} P_{\Omega}(\alpha) \ket{\alpha}\!\bra{\alpha},
\end{equation}
where $P_{\Omega}(\alpha)$ is a regular function and hence the integration can be performed in the usual way. The bound on the approximation error is given by Eqs.~(\ref{fidelity}) and (\ref{eq:tra-distance}) and can be tuned by varying the width parameter $L$. In the following, we discuss two interesting applications of this result for estimating the output of a quantum process using coherent states as a probe and estimating the outcome probability distribution of a general quantum measurement using heterodyne measurement.    

\subsection{Estimating the output of quantum processes with coherent states}

Consider a known state $\rho$ and an unknown CPTP process $\mathcal{E}$ for which we are interested in estimating the state $\mathcal{E}(\rho)$, i.e., the output of the channel. 
In standard quantum process tomography~\cite{NielsenChuang,Mohseni2008}, one sends a set of probe states to an unknown process, and the output states are measured. Using the effect of the process on the probe states, one can predict the output state for any input state within the same Hilbert space. This is done by considering a complete set of operators $\{\varrho_j\}$ that provides a basis for representing any input state $\rho=\sum_j c_j \varrho_j$. ($\varrho_j$ may not necessarily be a set of physical states.) By finding $\mathcal{E}(\varrho_j)$ through the action of the quantum process on the probe states and exploiting the linearity of the process, the output state is given by $\mathcal{E}(\rho)=\sum_j c_j \mathcal{E}(\varrho_j)$. This procedure, in general, can be cumbersome since finding $\mathcal{E}(\varrho_j)$ may require probe states that are not easy to prepare in the laboratory, especially for large dimensional Hilbert spaces. Hence, as discussed, quantum process tomography with coherent states based on the Glauber-Sudarshan representation, $\mathcal{E}(\rho)\! =\! \int\dm{\alpha} P(\alpha)\, \mathcal{E}(\ket{\alpha}\!\bra{\alpha})$, is of great interest, as the coherent probe states are readily available in the laboratory. However, for many quantum states, such as squeezed states and cat states, the \GSP\  exists as a highly singular distribution, and hence, using this expression for the output state is not useful, in general. As discussed in Sec.~\ref{Klaud-filter}, this problem can be avoided by considering the Fock basis $\{\varrho_{n,m}=\vert m \rangle \langle n \vert\}$ and estimating $\mathcal{E}(\vert m \rangle \langle n \vert)$ from the effect of the process on coherent states, $\mathcal{E}(\ket{\alpha}\!\bra{\alpha})$~\cite{Lobino2008,Rahimi2011}. This approach is limited to a truncated Hilbert space in the Fock basis, and the Hilbert space truncation introduces an error in the estimation of the output state for input states with nonzero high-order photon-number components such as squeezed states and cat states~\cite{Rahimi2011}.

Our formalism provides a more direct approach for the output estimation because by employing a physical filtering procedure, we can directly use $\mathcal{E}(\ket{\alpha}\!\bra{\alpha})$ in the Glauber-Sudarshan representation to estimate the output state of an unknown quantum process. One can use a filter function, such as the nonclassicality filter, whose Fourier transform is a probability density and whose width can be adjusted by the parameter $L$, to approximate any arbitrary input state $\rho$ by a filtered one $\rho_\Omega$. Then by using Eq.~(\ref{eq:reg-P-representation}), the corresponding output state, 
\begin{equation}
	\mathcal{E}(\rho_{\Omega})=\!\int\!\dm{\alpha} P_{\Omega}(\alpha) \mathcal{E}(\ket{\alpha}\!\bra{\alpha}),
\end{equation} 
is an approximation of the actual output state $\mathcal{E}(\rho)$. To investigate the error associated with this approximation,  we note that for any CPTP map the trace distance between the two output states is upper bounded by the trace distance between the corresponding input states $D\left(\mathcal{E}(\rho_1),\mathcal{E}(\rho_2)\right)\leq D\left(\rho_1,\rho_2\right)$ \cite{NielsenChuang}. Therefore, by using Eqs.~(\ref{eq:tra-distance}) and (\ref{eq:ent-fild}) we can see that for any $\delta>0$ there exists an $L$ such that
\begin{equation}
D\left(\mathcal{E}(\rho),\mathcal{E}(\rho_{\Omega})\right) \leq \sqrt{1-F_e({\rho},{\mathcal{E}}_\Omega)} \leq \delta.
\end{equation}   
As a consequence, the output estimation error can be made arbitrarily small by adjusting the width of the filter function.

This formalism is particularly useful  when the action of the quantum process on all coherent states can be simply described.
As an example, consider a loss channel with transmissivity $\eta$ which transforms coherent states into coherent states with attenuated amplitudes $\mathcal{E}_{\text{Loss}}(\ket{\alpha}\!\bra{\alpha})=\ket{\eta\alpha}\!\bra{\eta\alpha}$. Hence, any quantum state under a loss channel can be approximated by
\begin{equation}
	\mathcal{E}_{\text{Loss}}(\rho_{\Omega})=\frac{1}{\eta^2}\!\int\!\dm{\alpha} P_{\Omega}\!\left({\alpha}/{\eta}\right)\ket{\alpha}\!\bra{\alpha},
\end{equation}
where $P_{\Omega}(\alpha)$ is the regularized \GSP\ of the input state $\rho$. One can similarly consider other examples such as squeezing and amplification channels whose action on the coherent states is known. 

\subsection{Estimating the outcome probabilities of measurements using heterodyne measurement}

In general, a quantum measurement is described by a positive operator-valued measure (POVM) $\{\Pi_n\}$, where measurement operators satisfy $\Pi_n\geq0$ and $\sum_n \Pi_n=\mathcal{I}$. Given a quantum system in state $\rho$, the probabilities of measurement outcomes are given by the Born rule, $p(n|\rho)=\Tr[\rho\, \Pi_n]$. If the state is unknown, however, one has to perform a set of informationally complete measurements on an ensemble of identically prepared copies of the state in order to estimate $\rho$ through a procedure known as quantum state tomography. In general, quantum state tomography requires many measurement settings that, particularly for systems with infinite-dimensional Hilbert space, are very challenging. A readily available, informationally complete measurement for optical modes is heterodyne whose POVM elements are proportional to coherent states $\ket{\alpha}\!\bra{\alpha}/\pi$ and with which one directly samples from the Husimi $Q$ function of the quantum state $Q(\alpha)=\bra{\alpha}\rho\ket{\alpha}/\pi$. By using this measurement, for instance, one can construct an estimate of the density matrix in the Fock basis by evaluating multiple derivatives of $Q(\alpha)$ at the origin~\cite{mandel_wolf_1995}. 

Another interesting application of our formalism is to estimate probabilities of outcomes of a quantum measurement, such as photon counting, for an unknown quantum state using heterodyne measurement. For a given measurement with POVM elements $\Pi_n$, whose \GSP s can be highly singular, the outcome probabilities $p(n|\rho)$ can be approximated by
 \begin{align}
 	\begin{split}
p(n|\rho_\Omega)&=\Tr\left[\mathcal{E}_\Omega(\rho)\, \Pi_n\right]=\Tr\left[\rho\, \mathcal{E}^{*}_\Omega(\Pi_n)\right]\\
&=\pi\! \int\!\dm{\alpha} P_{\Omega}(n|\alpha)\, Q(\alpha),
 	\end{split}
 \end{align}
where the dual of the filtering map, $\mathcal{E}^{*}_\Omega$ is equal to $\mathcal{E}_\Omega$, assuming $\tilde{\Omega}(-\alpha)=\tilde{\Omega}(\alpha)$.
In the above equation, we have used Eq.~(\ref{eq:filterng-map}) for the CPTP filtering map and $ P_{\Omega}(n|\alpha)$ is the regularized \GSP\ of the POVM element $\Pi_n$.  Hence, by using an appropriate physical filtering map, the integral of highly-singular distributions can be converted into a regular one which, by having samples of the Husimi $Q$ function from heterodyne measurement, can be estimated using standard techniques.

For a given measurement, the trace distance between quantum states~(\ref{eq:tra-distance}) is an upper bound on the trace distance between the corresponding probabilities~\cite{NielsenChuang}
\begin{equation}
\frac{1}{2}\sum_n\big\lvert p(n|\rho)-p(n|\rho_{\Omega})\big\lvert
\leq \sqrt{1-F_e({\rho},{\mathcal{E}}_\Omega)}\leq \delta.
\end{equation}  
Therefore, by adjusting the width parameter of the filter function, one can achieve a desired level of accuracy $\delta$ in the probability estimation.
   

\section{Conclusions}\label{sec:conclusion}
We have studied phase-space filter functions, which are particularly useful for regularizing the Glauber-Sudarshan $P$ function, in the context of quantum maps in the space of operators in an infinite-dimensional Hilbert space. We have shown that the necessary and sufficient condition for such a map to be a quantum process, meaning completely positive and trace preserving, is positive semidefiniteness of the filter function. This physicality condition guarantees that the output of the filtering map is always a physical density operator with a regular Glauber-Sudarshan $P$ function. Examples of such a filter function are Gaussian filters and nonclassicality filters, whose Fourier transforms are probability densities. Note that physical filtering maps not only can be used as a theoretical tool for approximating quantum states but can also be implemented and used in an experiment. For instance, following the experimental method in~\cite{Kuhn2018}, one can use physical filtering maps to transform any nonclassical state to another nonclassical state with a regular Glauber-Sudarshan $P$ function, which can be used as a resource in quantum information science. This state-preparation technique, in practice, can replace the handling of highly singular Glauber-Sudarshan $P$ functions by an experimental procedure~\cite{Kuhn2018}.

The physicality condition derived in this work enables us to use the standard distance measures in quantum information to compare the states before and after this filtering. Importantly, we derived a lower bound on the fidelity between the states before and after the filtering and showed that by adjusting the width of the filter function this distance can be made arbitrarily small. Hence, using this formalism, any nonclassical state with a highly singular Glauber-Sudarshan $P$ function can be approximated, to an arbitrary accuracy, by a quantum state with a regular Glauber-Sudarshan $P$ function. This result makes the Glauber-Sudarshan representation a more practical tool in quantum information processing. As an interesting application in quantum information science, we considered estimating the output state of an unknown quantum process by knowing its action on coherent states. This experimental method can be used for the tomography of quantum processes for continuous-variable systems, in setups similar to previous experiments~\cite{Lobino2008,LobinoPRL2009,KumarPRL2013,Cooper2015,Kervinen2023}. In addition, we have shown that physical filtering maps can also be used to estimate the output probabilities of any measurement by using samples from the Husimi $Q$ function. For an unknown quantum state, the samples can be obtained using a readily available heterodyne measurement. This application can be useful for verifying the outcome probabilities of quantum experiments.

We also studied Klauder's filter functions, which were originally proposed to regularize the Glauber-Sudarshan $P$ functions. We showed that the associated filtering map does not meet the physicality requirement derived in this work. This implies that the output of Klauder's filtering map may be unphysical and using the corresponding regularized Glauber-Sudarshan $P$ functions may lead to unphysical results, such as negative probabilities. In this case, the output and input of Klauder's filtering maps cannot be compared using the fidelity or the trace distance, which have useful operational meanings in quantum information~\cite{NielsenChuang}. Hence, care must be taken in employing Klauder's filter functions in applications such as quantum process tomography~\cite{Lobino2008}, and further steps are required to impose the physicality, which can lead to an additional error.

Moreover, we considered filter functions that are the characteristic functions of density operators and have been considered in the context of the Wigner-function smoothing procedure. By using our criterion, we then identified a class of filter functions whose associated maps are positive but not completely positive. This class of filtering maps can be useful for witnessing the entanglement of quantum states~\cite{Horodecki-RMP-2009}. We leave this as a subject for future research. 

Another open question is to investigate the connection between filtering maps and measures of nonclassicality. Gaussian filtering has been used to define a measure of nonclassicality~\cite{Lee1991}. This measure can be viewed as the minimum amount of Gaussian noise required to make the filtered Glauber-Sudarshan $P$ function non-negative. By using the bound on the fidelity derived in this work, one can study whether the minimum Gaussian noise corresponds to the minimum distance between some states and the filtered ones with non-negative Glauber-Sudarshan $P$ function, or whether by using a non-Gaussian filter function the distance between the two states can be further minimized.

\section*{ACKNOWLEDGMENTS} The authors would like to acknowledge useful discussions with F. Narcowich and H. Nha.


%

\end{document}